\newtheorem{definition}{Definition}
\newtheorem{theorem}{Theorem}[]
\newtheorem{lemma}[]{Lemma}
\def\cS{\mathcal{S}}
\def\cN{\mathcal{N}}
\def\cX{\mathcal{X}}
\def\bmr{\bm{r}}
\def\bmx{\bm{x}}
\def\bmq{\bm{q}}
\def\bmp{\bm{p}}
\def\tU{\tilde{U}}
\def\*{\star}
\begin{document}

\title{GOSPAL: An Efficient Strategy-Proof Mechanism for Constrained Resource Allocation}

\author{Indu Yadav}
\affiliation{%
  \institution{Dept. of Electrical Engineering \\ IIT Bombay, India}
}
\email{indu@ee.iitb.ac.in}

 \author{Prasanna Chaporkar}
\affiliation{%
   \institution{Dept. of Electrical Engineering\\ 
   IIT Bombay, India}
}
 \email{chaporkar@ee.iitb.ac.in}

 \author{Abhay Karandikar}
\authornote{A.Karandikar is currently the director of IIT Kanpur (on leave from IIT Bombay) email: karandi@iitk.ac.in.}
 \affiliation{ %
 \institution{Dept. of Electrical Engineering,\\ IIT Bombay, India\\ 
     Director and Professor, IIT Kanpur}
     }
     \email{karandi@ee.iitb.ac.in}

\begin{abstract}
 We consider allocation of a resource to multiple interested users with a constraint that if the resource
is allocated to user $i$ then it can not be allocated simultaneously to a predefined set of users $\cS_i$. This 
scenario arises in many practical systems that include wireless networks and constrained queuing systems. It is known that
socially optimal strategy-proof mechanism is not only NP-hard, but it is also hard to approximate. This
renders optimal policy hard to use in practice. Here, we propose a computationally efficient mechanism
and prove it to be strategy proof. Using Monte Carlo simulations, we show that the social utility 
of the proposed scheme is close to that of the optimal. Further, we demonstrate how the proposed 
mechanism can be used for fair and efficient short-term spectrum allocation in resource constrained large wireless networks.
\end{abstract}
\keywords{ Strategy-Proof, Resource Allocation, Mechanism Design, Wireless networks}
\maketitle

\section{Introduction}
Optimal use of resources has always been a concern. In today's
world with increasing demand, efficient use of limited and scarce
resource has become a challenge. A resource allocation mechanism
must be designed to meet the strategic goals of the system. The
desired strategic goals may include maximization of social welfare,
efficient and fair utilization of limited resource and maximization of
revenue. Auction based mechanism is a popular way of distributing
the available resource among users \cite{krishna2009auction}.
Here, it is assumed that each
user has some quantitative valuation of the usefulness of the resource
for them. Based on the valuation, users bid for the resource and the
auctioneer or centralized controller arbitrates resource distribution based
on the received bid values. Each user is assumed to be greedy
and rational. Hence, a key challenge in using auctions for resource
distribution is that the users may not bid their true valuation if that
benefits them. Hence, key challenge in this approach is to device a
strategy-proof/truthful auction in which users do not have any incen-
tive in not bidding their true valuation \cite{osborne1994course}.
Hence, being rational, their bids would equal their respective true valuations. A well known
Vickrey Clarke Grooves (VCG) auction provides a framework for
designing strategy proof mechanism that maximizes social utility
 \cite{vickrey1961counterspeculation,clarke1971multipart,groves1973incentives}. 
However, depending on scenario under consideration
VCG may not be computationally feasible. Thus, alternative ways
are required.

In this paper, we consider a constrained resource allocation prob-
lem in which a single resource can be allocated to multiple interested
users. Users are assumed to be selfish and rational. The allocation is
constrained in a sense that corresponding to each user $i$, there exists
a pre-specified set of users $\cS_i$ such that if the resource is given to
user $i$, then it can not be given to any user in $\cS_i$. This constrained
resource allocation is relevant in many applications including sched-
uling in ad-hoc networks, overlay device-to-device communication
and spectrum allocation in cellular networks. Moreover, in many of
these scenarios the resource in allocated for some time after which
the reallocation is done, e.g. a channel is allocated to non-interfering
links in wireless network for a predefined time slot and the allocation
may vary from slot to slot based on many factors including backlog,
delay, data priority and channel state. Thus, we assume that the
resource is auctioned periodically. It is known that VCG auction is
NP-hard for constrained resource allocation \cite{garey2002computers}. Thus, it can not be
used it this case as resource allocation needs to be computationally
fast for repeated execution.

In many applications for which the aforementioned constrained
auction mechanism is relevant, there are desirable features other than
maximization of social welfare for an allocation mechanism. For
example, if we consider each user $i$ to be a Base Station (BS) in a
cellular system and  $\cS_i$  to be the set BSs that interfere with $i$, then
for efficient utilization of the scarce and expensive spectrum must
be ensured. Thus, the number of BSs that get the spectrum may also
be important. Also, in the case where BSs may belong to different
operators, it is possible that \textquotedblleft small\textquotedblright  operators may not be able to
bid comparatively with \textquotedblleft big\textquotedblright  operators and hence may never get the
spectrum. It may be desirable to make the auction based allocation
fair in a sense that fraction of time slots in which the BSs get the
spectrum must be balanced across the BSs. {\it Our aim in this paper is
to propose a strategy-proof and computationally feasible auction for
resource allocation for constrained resource allocation problem. We
show that the proposed mechanism not only achieve near-optimal social utility
but also perform well in terms of resource utilization
and fairness parameters.}

Now, we state some most relevant previous work for the problem under consideration. 
As stated before VCG auctions are applicable
for the problem under consideration, but it is computationally infeasible.
To address this, computationally efficient strategy-proof 
alternatives that include greedy \cite{zhou2008ebay} and SMALL \cite{wu2011small} are proposed. 
While SMALL has poor resource utilization, greedy suffers from
unfair allocation when considered across time. 
In \cite{trust} authors propose a double auction based spectrum allocation  mechanism which holds strategy-proof property.
Authors in \cite{gopinathan} propose spectrum allocation mechanism for repeated auctions for maximizing the social welfare. 
They also ensure fairness in spectrum allocation among the users.
Some other relevant work includes the algorithms for real time dynamic spectrum allocation among the base station \cite{gandhi2008towards,subramanian2008near}. These algorithms are not strategy-proof.
Auction based general framework for spectrum allocation in cognitive network has been proposed by the authors in \cite{kasbekar}.
Authors in \cite{ji2007cognitive} propose auction based approach is efficient for spectrum allocation among the base stations in cognitive networks. Authors in \cite{matinmikko2014spectrum} present Licensed Shared Access framework for spectrum sharing between incumbent and secondary users. This framework requires efficient mechanism to address dynamic load variations in the  network.
A comprehensive overview of the game
theoretic approaches used for spectrum sharing in cognitive radio
networks is given in \cite{ji2007cognitive}.


{Our main contributions are as follows:}
\begin{itemize}
	\item We study the problem of resource allocation among multiple users with limited constrained set, which is NP-Hard and becomes intractable for large set of users.
	\item We propose truthful mechanism Group Optimal Strategy-Proof Allocation (GOSPAL) for resource allocation in generic framework and also present its application in resource allocation among the base stations in wireless networks.
	\item With simulations we demonstrate that the proposed mechanism performs well in terms of achieving the social efficiency of resource allocation close to the optimal solution. The proposed mechanism is also computationally efficient, which makes it practical feasible for implementations in the large set of users.
\end{itemize}

The rest of the paper is organized as follows. Section \ref{sec:sys_model} describes the system model and problem formulation. In Section \ref{sec:algo}, proposed algorithm GOSPAL is presented in detail. We demonstrate the applicability of the proposed algorithm in spectrum allocation in wireless networks in Section \ref{sec:application}. Simulations results are presented in Section \ref{sec:Simulation}.In Section \ref{sec:conclusion} we conclude our work.

\section{System Model and  Preliminaries}
\label{sec:sys_model}
 We consider the resource allocation problem where a resource can be allocated to the multiple users. The time is slotted, and each slot
 is called allocation frame. The framework for resource allocation comprises of a centralized controller, database and a set of users. The database contains the information about the resource available for allocation. The centralized controller is the key entity which performs resource allocation among the users
 in each allocation frame. Each user $i$ has a constraint set $\mathcal{S}_i$ 
 such that if resource is allocated to user $i$ then it can not be allocated to the users contained in the $\cS_i$ simultaneously. We assume that the constraints are symmetric, i.e. if $j \in \cS_i$ then $i \in \cS_j$. Let $\mathcal{N} = \{1,2,\ldots,n\}$ denote the set of users. We assume all the users, though rational, act selfishly and are non co-operative. 
 The following steps are repeated in every frame $t$:
 
%
%
%
Each user in the system has a hidden state associated with it. Depending on its state, users make their demand valuation. Let $r_i(t)$ denote the actual demand valuation of the user $i$ in frame $t$. The actual demand valuation is a user's private information. Each user $i$ communicates {\it bid} value $q_i(t)$ to the centralized controller at the beginning of frame $t$, and $q_i(t)$ need not be equal to $r_i(t)$. Based on the bids received, the controller allocates resources while respecting the constraints. The allocation remains for the frame duration. In the following, we discuss resource allocation problem that needs to
be solved in every allocation frame. For brevity, we omit $t$ from notation.
   
\subsection{Action Space and Action Profile}
 Action profile of a user is the set of all possible bids that can be chosen by the user. Let $q_i$ denote the bid of user $i$. Here, a user is free to declare any positive $q_i$. Therefore, the action profile of user $i$ is $q_i \in \mathcal{R}_{+}$. In case any user has bid 0, then it is {\it not} considered for resource allocation. Hence, without
 loss of generality, we assume that all users $q_i > 0$.
 Action space of the system with $n$ users is denoted as $n$-tuple, $\bm{q} = (q_1,\ldots,q_n)$ comprising of the bids corresponding to each user. Since action profile of each user is positive real value, the action space of $n$ users correspond to $\bmq \in \mathcal{R}_{+}^{n}$. Also, define the action profile $\bmq_{-i}\in \mathcal{R}_{+}^{n-1}$ comprising of bids from all the users as in $\bmq$ except that of user $i$. Next, we formulate the problem that needs to be
 solved in each allocation frame.

\subsection{Problem Formulation}
\label{sec:prob_form}
Let $\bmx = (x_1,\ldots,x_n)$ denote a indicator binary vector indicating resource allocation such that
$x_i = 1$ only when the resource is allocated to user $i$.  
\begin{definition} \label{defi:feasible_rsa}
	A binary vector $\bmx$ is a feasible resource allocation if $x_i + \sum_{j \in \cS_i}x_j \leq 1$
	for every $i\in\cN$. Moreover, if  $x_i + \sum_{j \in \cS_i}x_j = 1$
	for every $i\in\cN$, then $\bmx$ is called maximal resource allocation.
\end{definition} 
Let $\cX$ denote the set of all feasible resource allocations.
Next we define auction based resource allocation mechanism.
\begin{definition}\label{defi:rss_allocation}
	An auction based resource allocation policy $\pi$ is a map from $\mathcal{R}^n_+$ to $\cX \times [0,\infty)^n$, i.e. for given bids $\bmq$, $\pi$ outputs feasible allocation 
	$\bmx^\pi(\bmq)$ and a price vector 
	$\bmp^\pi(\bmq) = (p_1^\pi(\bmq),\ldots,p_n^\pi(\bmq))$.
\end{definition}
Thus, a resource allocation policy $\pi$ outputs a feasible resource allocation for any given bid vector $\bmq$,
and also the price that each user needs to pay for the allocated resource.
Let $\Pi$ denote the set of all auction based allocation policies.
\begin{definition} \label{defi:social_utility}
	Social utility under $\pi$ for bid values $\bmq$ is defined as 
	$U_s^\pi(\bmq) = \sum_{i=1}^{n} r_i x_i^\pi(\bmq)$. Moreover, utility for user $i$ for bids $\bmq$ 
	under $\pi$ is given 
	as $U_i^\pi(\bmq) = (r_i - p_i^\pi(\bmq)) x_i^\pi(\bmq)$. 
\end{definition}

Note that the value of social utility is the sum of true evaluations, not the bid values, of the users
to which $\pi$ allocates resources. Moreover, utility for a user is the difference between its true
evaluation $r_i$ and price $p_i$ charged under policy $\pi$. Aim of the centralized controller is
to design $\pi$ that maximizes social utility, i.e. it wants 
\begin{align}\label{eqn:pi*}
\pi^\* \in \arg\max_{\pi \in \Pi} U_s^\pi(\bmq),
\end{align}  
while each user wants to bid so as to maximize its own utility. Note that $\bmr$ is the private information with the users, and the centralized controller may not know it. Thus, we need to design mechanism in which rational users have no incentive to submit bid other than its true evaluation.
\begin{definition}\label{def: strategy-proofness}
	A mechanism $\pi$ is truthful (strategy-proof) if 
	\[U^\pi_i(r_i,\bmq_{-i}) \geq U^\pi_i(\bmq), \mbox{\ for all $\bmq\in \mathcal{R}_{+}^{n}$}. \]
\end{definition}
Thus, for strategy-proof mechanism $\pi$, users have no incentive to bid anything other than its true 
evaluation. Next, for completeness sake, we describe VCG mechanism that obtains strategy proof optimal
mechanism $\pi^\*$ given in (\ref{eqn:pi*}).

\subsection{Vickrey-Clarke-Grooves (VCG) Mechanism}
\label{sec:VCG}
VCG mechanism for the specified problem is implemented in two steps as described next.
In the first step, given bid vector $\bmq$,
find $U^\*(\bmq) = \max_{\bmx \in \cX} \sum_{i=1}^{n} q_i x_i$. Thus, $\bmx^\*(\bmq)$ is a resource allocation that maximizes the total bid value aggregated over all users to which resource is allocated.
In the second step, price for each user is computed as described next. Let $\cX_{-i}$ denote the set of all
possible resource allocations over set of users $\cN \setminus \{i\}$ with no change in the constraint sets
for the considered users. Let $U^\*_{-i}(\bmq) = \max_{\bmx \in \cX_{-i}} \sum_{j\not=i} q_j x_j$.
Now, price for user $i$ is given by \[p_i = U^\*_{-i}(\bmq) - (U^\*(\bmq)-q_i).\]

The VCG mechanism is known to be strategy proof. Hence, users do not have any incentive to bid anything other than $r_i$. Thus, 
VCG allocation solves the desired optimization problem (\ref{eqn:pi*}). In spite of optimality, VCG mechanism is not practical as it is NP-hard to
compute resource allocation that achieves $U^\*(\bmq)$ and prices $\bmp$. 
Thus, finding polynomial optimal resource allocation is not feasible unless P=NP, and hence
computationally feasible sub-optimal resource allocation strategy with desired characteristic are required.
Next, we propose one such algorithm.

\section{GOSPAL Algorithm}
\label{sec:algo}
 In this section, we describe an efficient strategy-proof mechanism for resource allocation. The mechanism is implemented in two phases: (1)~Resource Allocation phase and (2)~Pricing phase. The resource allocation phase arbitrates users that should get the resource. If a user is allocated resource, 
 how much it must to pay is decided in a pricing phase. The design of a pricing scheme is essential to enforce strategy-proofness (truthfulness). First, we describe the resource allocation strategy and the pricing scheme is presented subsequently.
 
 \subsection{Resource Allocation Phase}
 As discussed above, the resource allocation phase determines which users should be allocated the resource, i.e. we select $\bmx \in \cX$. Towards this end, our first step is to randomly partition the set of all
 users $\cN$ into at most $\eta$ non-conflicting groups denoted as $\{G_1,\ldots,G_\eta \}$, 
 where $\eta = \max_{i\in \cN} |\cS_i|+1$. 
 Here, $|A|$ denotes the cardinality of set $A$. The partitioning is achieved using an iterative greedy algorithm. In first iteration a user is selected at random, and it is put in group $G_1$. In a typical 
 iteration, a user $i$ is picked at random from $\cN \setminus \cup_{k=1}^\eta G_k$ and it is put in the
group $G_{u_{\min}}$ such that $u_{\min} = \min\{u : u\in\{1,\ldots,\eta\} \mbox{  and } G_u \cap \cS_i  = \phi\}$.
We continue this process until $\cup_{k=1}^\eta G_k = \cN$. Pseudo code for the randomized 
conflict-free grouping is provided in Algorithm~\ref{alg:rcg}.
\begin{algorithm}[t]
	\caption{Pseudo code for randomized  conflict-free grouping
		\label{alg:rcg}}
	\begin{algorithmic}
		\State \textit {\bf Input:} $\cN$, $\cS_i$ for every $i \in \cN$
	\end{algorithmic}
	\begin{algorithmic}
		\State \textit{\bf Output:} A conflict-free partition $\{G_1,\ldots,G_\eta\}$
	\end{algorithmic}
	\begin{algorithmic}[1]
		\State Initialize $\cN_{\rm temp} = \cN$ and $G_u = \phi$ for every $u=1,\ldots,\eta$ \label{step:initialize}
		\While{$\cN_{\rm temp} \not= \phi$}
		\State Choose a user, say $i$, from $\cN_{\rm temp}$ uniformly at random \label{step:11}
		\State Find $u_{\min} = \min\{u: u\in\{1,\ldots,\eta\} \mbox{  and } G_u \cap \cS_i = \phi  \}$ \label{step:12}
		\State $G_{u_{\min}} \leftarrow G_{u_{\min}} \cup \{i\}$
		\EndWhile
	\end{algorithmic}
\end{algorithm}
Following lemma summarizes key properties of the partitioning step.
\begin{lemma}\label{lemma:rcg}
	A conflict-free grouping algorithm given in Algorithm~\ref{alg:rcg} outputs a partition $\{G_1,\ldots,G_\eta\}$
	of $\cN$ such that if $i,j \in G_u$, then $j \not\in \cS_i$.  
\end{lemma}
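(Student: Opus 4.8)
The plan is to establish two separate facts: that Algorithm~\ref{alg:rcg} terminates having produced a genuine partition of $\cN$, and that within every resulting group no two users conflict. I would carry the argument by maintaining the invariant that after each iteration every group $G_u$ is conflict-free, and the centerpiece will be verifying that the selection in Step~\ref{step:12} never fails.

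First I would argue that $u_{\min}$ in Step~\ref{step:12} is always well-defined, i.e. the set $\{u : G_u \cap \cS_i = \phi\}$ is nonempty whenever a user $i$ is picked. Here I would use a pigeonhole count: the members of $\cS_i$ that have already been assigned lie in at most $|\cS_i|$ distinct groups, since each assigned user belongs to exactly one group. Because $|\cS_i| \le \max_{k\in\cN}|\cS_k| = \eta - 1$ while there are $\eta$ groups available, at least one group is disjoint from $\cS_i$, so a valid $u_{\min}$ exists. I expect this counting step --- linking the number of groups ``blocked'' by $\cS_i$ to the defining value of $\eta$ --- to be the main obstacle, as everything else is bookkeeping.

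With well-definedness in hand, the partition claim follows easily: each iteration processes and removes exactly one user of $\cN_{\rm temp}$ and inserts it into precisely one group, so the loop runs until $\cup_u G_u = \cN$ with the groups pairwise disjoint, giving a partition of $\cN$.

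For the conflict-free property I would take arbitrary $i,j \in G_u$ with $i \ne j$ and, without loss of generality, suppose $i$ was inserted after $j$. The selection rule guarantees $i$ is placed into $G_u$ only when $G_u \cap \cS_i = \phi$ at that instant; since $j$ already resides in $G_u$, this yields $j \notin \cS_i$. The symmetry assumption ($j \in \cS_i \Leftrightarrow i \in \cS_j$) then also gives $i \notin \cS_j$, so $i$ and $j$ do not conflict, which is exactly the desired property and completes the proof.
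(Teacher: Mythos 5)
Your proof is correct and follows essentially the same route as the paper: the paper's entire argument is the pigeonhole observation that since $|\cS_i| \leq \eta - 1$, the already-assigned members of $\cS_i$ occupy at most $\eta-1$ of the $\eta$ groups, so the set in Step~\ref{step:12} is nonempty, with the partition and conflict-free properties dismissed as immediate from the construction. You simply spell out the bookkeeping (termination, disjointness, and the insertion-order argument with the symmetry of the constraint sets) that the paper leaves implicit.
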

\begin{proof}
	We need to show that the RHS in Step~\ref{step:12} of the algorithm is a non-empty set in every iteration. Rest follows immediately from how this set is constructed. Required follows from the fact maximum
	cardinality of any $\cS_i$ is $\eta - 1$. Thus, there exist at least one $u$ such that $G_u \cap \cS_i = \phi$.
\end{proof}

The Lemma~\ref{lemma:rcg} states that the resource can be allocated to all the members of any group
$G_u$ without violating the allocation constraint. Moreover, it is important to note that the grouping
does not depend on the bid values $\bmq$.

Now,  let $\Omega_g$ denote the set of all possible orderings of the sets $\{G_1,\ldots,G_\eta\}$ obtained using  conflict-free grouping algorithm. Thus, $|\Omega_g| = \eta!$. Furthermore,
let $\omega_j \in \Omega_g$ denote the $j^`{th}$ ordering of the groups in the set $\Omega_g$.
We denote $\omega_j$ by a tuple $(G_{j1},\ldots,G_{j\eta})$.
For example if $\eta=3$, then there are $|\Omega_g| = 3! = 6$ different orderings. 
One of the possible $6$ group ordering or tuple is $ \omega_j = (G_2, ~G_1, ~G_3)$. 
Thus, $G_{j1} = G_2$, $G_{j2} = G_1$ and $G_{j3} = G_3$.
A resource allocation given group ordering $\omega_j$ is done as follows.
We first assign the resource to each user in $G_{j1}$, then to all the users in 
$G_{j2} \setminus (\cup_{i \in G_{j1}} \cS_i)$, and so on. Pseudo-code to obtain
resource allocation corresponding to group ordering $\omega_j$ is given in
Algorithm~\ref{alg:ralo}.
\begin{algorithm}[t]
	\caption{Pseudo code for resource allocation for given group ordering $\omega_j$
		\label{alg:ralo}}
	\begin{algorithmic}
		\State \textit {\bf Input:} $G_{ju}$ for every $1\leq u \leq \eta$, $\cS_i$ for every $i \in \cN$
	\end{algorithmic}
	\begin{algorithmic}
		\State \textit{\bf Output:} A resource allocation $\bmx(j)$
	\end{algorithmic}
	\begin{algorithmic}[1]
		\State Initialize $G_{\rm temp} = \phi$, $\ell = 1$ and $x_i(j) =0$ for all $i \in \cN$  \label{step:initialize2}
		\While{$\ell \leq \eta$}
		\State $G_a \leftarrow G_{j\ell} \setminus (\cup_{i \in G_{\rm temp}} \cS_i)$
		\State $x_i(j) \leftarrow 1$ for every $i \in G_a$ \label{step:21}
		\State $G_{\rm temp} \leftarrow G_{\rm temp} \cup G_a$ \label{step:22}
		\State $\ell \leftarrow \ell + 1$
		\EndWhile
	\end{algorithmic}
\end{algorithm}
Following guarantee can be given about output of the algorithm.
\begin{lemma}
	\label{lemma:res_aloc}
	The resource allocation vector $\bmx(j)$ given by Algorithm~\ref{alg:ralo} corresponding to any
	group tuple $\omega_j$ is feasible, i.e. $\bmx(j) \in \cX$.
	Moreover, $\bmx(j)$ is maximal allocation vector for every $j$.
\end{lemma}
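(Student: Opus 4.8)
The plan is to derive both claims from a single invariant maintained by Algorithm~\ref{alg:ralo}: immediately after the iteration with index $\ell$ finishes, the set $G_{\rm temp}$ equals exactly the set of users allocated the resource so far, that is $G_{\rm temp} = \{i \in \cN : x_i(j) = 1\}$. This follows by induction on $\ell$, since each iteration sets $x_i(j)\leftarrow 1$ for precisely the members of $G_a$ and in the same step adjoins $G_a$ to $G_{\rm temp}$, so both sets grow in lockstep from the common initial value $\phi$. Two further facts are used throughout: by Lemma~\ref{lemma:rcg} the collection $\{G_1,\dots,G_\eta\}$ partitions $\cN$, so every user lies in exactly one group; and the constraint sets are symmetric, so $k \in \cS_i$ if and only if $i \in \cS_k$.

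For feasibility, i.e.\ $\bmx(j)\in\cX$, I would show that no two allocated users conflict, splitting on the iteration in which each is allocated. Take $i\neq k$ with $x_i(j)=x_k(j)=1$. If both are allocated during the same iteration $\ell$, then $i,k\in G_{j\ell}$, and Lemma~\ref{lemma:rcg} immediately gives $k\notin\cS_i$. Otherwise they are allocated in distinct iterations; say $i$ is allocated strictly before $k$, at iteration $\ell$. By the invariant, $i\in G_{\rm temp}$ at the start of iteration $\ell$, and since $k$ survives the removal step $G_a = G_{j\ell}\setminus\bigl(\bigcup_{m\in G_{\rm temp}}\cS_m\bigr)$, we must have $k\notin\cS_i$. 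In both regimes the pair respects the constraint, so $\bmx(j)$ violates no conflict set and lies in $\cX$.

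For maximality I would argue that every user is either allocated or conflicts with an allocated user. Fix $i$ with $x_i(j)=0$. Since the groups partition $\cN$, the user $i$ belongs to a unique group $G_{j\ell}$, yet it was not placed into the corresponding $G_a$; by the definition of $G_a$ this forces $i\in\bigcup_{m\in G_{\rm temp}}\cS_m$ at iteration $\ell$, so some already-allocated $m$ satisfies $i\in\cS_m$. Symmetry then gives $m\in\cS_i$ with $x_m(j)=1$, so $i$ is blocked by an allocated conflicting user. Hence $x_i(j)+\sum_{k\in\cS_i}x_k(j)\geq 1$ for every $i$, and combined with the feasibility bound established above this gives the maximality equality of Definition~\ref{defi:feasible_rsa}: no further user can be added without violating a constraint.

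I expect the main obstacle to be the careful separation of the feasibility argument into within-group and cross-group pairs. The within-group case is settled entirely by the conflict-free grouping of Lemma~\ref{lemma:rcg}, whereas the cross-group case is settled by the explicit removal of $\bigcup_{m\in G_{\rm temp}}\cS_m$ together with the invariant that locates the earlier user inside $G_{\rm temp}$. In both steps, as well as in the maximality argument, it is essential to invoke the symmetry of the constraint sets so that ``$k$ does not conflict with $i$'' and ``$i$ does not conflict with $k$'' may be used interchangeably; once the invariant $G_{\rm temp}=\{i:x_i(j)=1\}$ is in hand, the remainder reduces to this case analysis.
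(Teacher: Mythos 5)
Your proposal is correct and takes essentially the same route as the paper: the paper likewise proves feasibility by induction over iterations using the fact that $G_{\rm temp}$ holds exactly the users allocated so far and that the removal step $G_{j\ell} \setminus (\cup_{i \in G_{\rm temp}} \cS_i)$ excludes conflicting users, and it proves maximality by the contrapositive of your argument (if an unallocated $u \in G_{j\ell}$ conflicted with no previously allocated user, the algorithm would have allocated it in iteration $\ell$). Your version merely makes explicit two points the paper leaves implicit --- the invariant $G_{\rm temp} = \{i : x_i(j) = 1\}$ and the use of the symmetry assumption $j \in \cS_i \Leftrightarrow i \in \cS_j$ --- which is a welcome tightening, not a different proof.
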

\begin{proof}
	Let $\bmx_{\ell}(j)$ denote the allocation after $\ell$ iterations of the algorithm. We first show that
	$\bmx_{\ell}(j) \in \cX$ for every $1\leq \ell \leq \eta$. Note that for $\ell =1$,
	$x_{\ell i } = 1$ only for $i\in G_{j1}$. From Lemma~\ref{lemma:rcg}, $\bmx_1(j) \in \cX$ follows.
	Suppose $\bmx_{\ell}(j) \in \cX$ holds for every $1 \leq \ell \leq \ell'$. Consider $(\ell' +1)^{\rm th}$
	iteration of the algorithm. Note that the $G_{\rm temp}$ in every iteration contains users to which
	the resource is allocated until that iteration. Note that in Step~\ref{step:22} of the algorithm the
	resource is allocated only to users in $G_{j(\ell'+1)}$ that do not conflict with the users in $G_{\rm temp}$. This proves that $\bmx_{\ell'+1}(j) \in \cX$ and the required follows using induction. Now, we prove that
	the resource allocation is maximal. Suppose not, then there exist a user $u$ such that $x_{\ell u}(j) = 0$ in the output of the algorithm, but $\bmx'$ such that $x'_i = x_{\ell i}(j)$ for every $i \not= u$ and $x_u' = 1$ is in $\cX$.
	Since, $(G_{j1},\ldots,G_{j\eta})$ is a partition of $\cN$, $u$ must belong to some $G_{j\ell}$. Also,
	$u$ must not belong to $\cS_i$ for any $i$ which is allocated the resource in first $\ell - 1$ iterations of the algorithm. But, then the algorithm will allocate resource to user $u$ in $\ell^{\rm th}$ iteration. Hence, no such user exists. This proves the required.
\end{proof}

Now define, with little abuse of notation, the perceived social utility under allocation $\bmx(j)$ as
\[ \tU_j(\bmq)   = \sum_{i=1}^{n} q_i x_i(j) . \]
Moreover, define $j^\*_{\bmq} = \arg\max_{\{j: \omega_j \in \Omega_g  \}} \tU_j(\bmq) $. Thus, $\omega_{j^\*}$ is the group permutation for which perceived utility is maximized among all possible group permutations. We propose to choose resource allocation
$\bmx(j^\*_{\bmq})$. Note that even though the grouping does not depend on the bids $\bmq$, the chosen resource allocation does. 
Let $\tU^\*(\bmq)$ denote the maximum value of the perceived social utility  for the bids $\bmq$.
Next, we describe our proposed pricing scheme.

\subsection{Pricing Scheme}
 \par After the resource allocation, we propose the appropriate pricing scheme which ensures the strategy-proofness of the proposed algorithm. That is, if any user tries to deviate from its actual demand, it is penalized. 
 Let $\bmq_{-i}(\epsilon)$  denote the bid vector in which the bids of all the users except $i$ are same as that in $\bmq$, but the bid of
 user $i$ is $\epsilon > 0$ in $\bmq_{-i}(\epsilon)$.
 Now, the price charged from the user $i$  is given as:
 \begin{equation}
  \label{eqn:price_gospal}
  p_i(\bmq) =  \left[\lim_{\epsilon \downarrow 0} \tU^\*(\bmq_{-i}(\epsilon)) - (\tU^\*(\bmq) - q_i)\right] \times x_i(j^\*_{\bmq} ).
 \end{equation}
 
 We state the following straightforward result.
 
 \begin{lemma}
 	\label{lemma:p_prop}
 	Under any bid values $\bmq > 0$, $0 \leq p_i \leq q_i$ for every $i \in \cN$.
 \end{lemma}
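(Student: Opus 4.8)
The plan is to exploit the key structural fact, already noted in the paper, that for each group ordering $\omega_j$ the allocation vector $\bmx(j)$ produced by Algorithm~\ref{alg:ralo} depends only on the groups and the constraint sets, and \emph{not} on the bids $\bmq$. Consequently $\tU^\*(\bmq) = \max_{\{j:\omega_j\in\Omega_g\}} \sum_{k=1}^{n} q_k x_k(j)$ is a pointwise maximum of finitely many (at most $\eta!$) linear functions of $\bmq$, and all the terms $x_k(j)$ appearing below are fixed constants once the grouping is realized. The first, trivial, case is $x_i(j^\*_{\bmq}) = 0$: then the multiplicative factor in (\ref{eqn:price_gospal}) vanishes, so $p_i(\bmq)=0$, which already satisfies $0 \le p_i \le q_i$ since $q_i > 0$. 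It remains to treat the case $x_i(j^\*_{\bmq}) = 1$.

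Write $A = \lim_{\epsilon\downarrow 0}\tU^\*(\bmq_{-i}(\epsilon))$ and $B = \tU^\*(\bmq)-q_i$, so that $p_i = A - B$ in this case. First I would evaluate the limit. Since $\tU^\*(\bmq_{-i}(\epsilon)) = \max_j\bigl(\epsilon\, x_i(j) + \sum_{k\neq i} q_k x_k(j)\bigr)$ is a maximum of finitely many functions that are affine (hence continuous) in $\epsilon$, the one-sided limit equals the value at $\epsilon = 0$, giving $A = \max_j \sum_{k\neq i} q_k x_k(j)$. On the other hand, using $x_i(j^\*_{\bmq})=1$, we get $B = \sum_{k\neq i} q_k x_k(j^\*_{\bmq})$, i.e. the perceived utility of the chosen allocation with user $i$'s own contribution removed.

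With $A$ and $B$ in this form, the two bounds follow by elementary comparisons of these finite maxima. For the lower bound $p_i \ge 0$, note that $j^\*_{\bmq}$ is one admissible index in the maximization defining $A$, so $A \ge \sum_{k\neq i} q_k x_k(j^\*_{\bmq}) = B$, whence $p_i = A-B \ge 0$. For the upper bound $p_i \le q_i$, I would add back the non-negative term $q_i x_i(j) \ge 0$ inside each summand: for every $j$, $\sum_{k\neq i} q_k x_k(j) \le \sum_{k} q_k x_k(j) \le \tU^\*(\bmq)$, and taking the maximum over $j$ on the left yields $A \le \tU^\*(\bmq) = B + q_i$, i.e. $p_i = A - B \le q_i$. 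The only step requiring any care is the evaluation of the limit, and that is immediate from the finiteness of $\Omega_g$, so no genuine limiting behaviour is involved; all remaining steps are just monotonicity of a maximum over a fixed finite family.
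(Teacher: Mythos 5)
Your proof is correct, and it is worth comparing with the paper's own argument, which is considerably terser: the paper simply observes that $\tU^\*(\bmq_{-i}(\epsilon)) \geq \tU^\*(\bmq) - q_i + \epsilon$ (obtained by testing the $\bmq$-optimal group ordering against the perturbed bid vector) and takes $\epsilon \downarrow 0$. Note that this inequality alone only delivers the lower bound $p_i \geq 0$; the upper bound $p_i \leq q_i$ additionally needs the complementary inequality $\lim_{\epsilon \downarrow 0}\tU^\*(\bmq_{-i}(\epsilon)) \leq \tU^\*(\bmq)$, which follows from monotonicity of each $\tU_j$ in the bids but is left implicit in the paper. Your route supplies exactly this missing half, and does so by a slightly different mechanism: instead of manipulating one-sided inequalities under a limit, you use the bid-independence of the allocations $\bmx(j)$ (for the realized grouping) to see that $\tU^\*(\bmq_{-i}(\epsilon))$ is a pointwise maximum of finitely many affine functions of $\epsilon$, hence continuous, so the limit evaluates exactly to $A = \max_j \sum_{k\neq i} q_k x_k(j)$; both bounds then reduce to elementary comparisons of finite maxima ($A \geq B$ by testing $j^\*_{\bmq}$, and $A \leq B + q_i$ by adding back the nonnegative term $q_i x_i(j)$). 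What your approach buys is a genuinely limit-free and fully rigorous argument that handles both bounds symmetrically, plus an explicit (if trivial) treatment of the case $x_i(j^\*_{\bmq}) = 0$, which the paper skips; what the paper's approach buys is brevity, since one well-chosen test inequality encodes the essential comparison. Your one stated caveat --- that the same realized grouping underlies both $\tU^\*(\bmq)$ and $\tU^\*(\bmq_{-i}(\epsilon))$ --- is indeed necessary for the argument and is consistent with how the pricing in (\ref{eqn:price_gospal}) is computed within a single frame, so no gap remains.
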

 \begin{proof}
	Note that for every $\epsilon > 0$, \[\tU^\*(\bmq_{-i}(\epsilon)) \geq \tU^\*(\bmq) - q_i + \epsilon. \] 
	Thus the required follows by taking limit $\epsilon \downarrow 0$ on both sides of the above inequality.
\end{proof}

This lemma clearly shows that for any truthful user $i$, utility obtained is non-negative irrespective of the bids of other users.
 
 Note that the optimal group permutation under bid vectors $\bmq$ and $\bmq_{-i}(\epsilon)$ can be different. Also, note that
 unlike VCG, in our pricing scheme we do not completely remove user $i$, rather user $i$ is always present. Only the bid value
 of user $i$ goes to zero. This distinction is important as removing a user changes resource allocation conflicts. As illustration consider 
 a system with five users with constraint sets given by $\cS_1 = \{3,4,5  \}$, $\cS_2 = \phi$, $\cS_i = \{1\}$ for $i=3,4,5$.
 Suppose grouping given by Algorithm~\ref{alg:rcg} is $G_1 = \{1,2\}$ and $G_2 = \{3,4,5\}$. Consider permutation
 $\omega_1 = (G_1,G_2)$. Thus, as per Algorithm~\ref{alg:ralo}, the resources will be first allocated to all the users in $G_1$ and then to the
 users in $G_2$ that do not have conflict with the users in $G_1$. Note that when user 1 is present in the system, resource can not be 
 allocated to any user in $G_2$. But, if we remove user 1 completely, then all the user in $G_2$ can get the resource. Thus, there is a clear
 difference under our scheme with regard to when user is present and when it is not. It is important to note that the price in our scheme is
 calculated while retaining user in the system unlike VCG. Pseudo code for the proposed algorithm
 is given in Algorithm~\ref{alg:gospal}. Next we prove the key properties of our proposed algorithm.

 \begin{algorithm}[t]
 	\caption{Pseudo code for GOSPAL mechanism}
 		\label{alg:gospal}
 	\begin{algorithmic}
 		\State \textit{\bf Input:} bid vector $\bmq$, $\cS_i$ for every $i \in \cN$
 	\end{algorithmic}
 	\begin{algorithmic}
 		\State \textit{\bf Output:} Resource allocation $\bmx(\bmq)$ and price vector $\bm{p}(\bmq)$
 	\end{algorithmic}
 	\begin{algorithmic}[1]
 		\State Use Algorithm~\ref{alg:rcg} to obtain conflict free grouping $(G_1,\ldots,G_\eta)$
 		\For{$\omega_j\in \Omega_g$}
 		\State Find allocation $\bmx(j)$ using Algorithm~\ref{alg:ralo}
 		\State Compute $\tU_j(\bmq)   = \sum_{i=1}^{n} q_i x_i(j) $
 		\EndFor
 		\State Find $j^\*_{\bmq} = \arg\max_{\{j: \omega_j \in \Omega_g  \}} \tU_j(\bmq) $
 		\State Choose $\bmx(\bmq) = \bmx(j^\*_{\bmq} )$
 		\State Compute prices using (\ref{eqn:price_gospal})
 	\end{algorithmic}
 \end{algorithm}

\begin{lemma}(\textbf{Monotonicity})
\label{lemma:1}
 If a user $i$ is allocated resources for bids $\bmq$, then it will also be allocated resources for bids $\bmq_{-i}(\epsilon)$ for every $\epsilon > q_i$.
 Moreover, optimal group permutation under $\bmq$ and $\bmq_{-i}(\epsilon)$ are the same, i.e. $j^\*_{\bmq} = j^\*_{\bmq_{-i}(\epsilon)}$.
\end{lemma}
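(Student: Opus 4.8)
The plan is to exploit the fact, emphasized just before the lemma, that the grouping $\{G_1,\ldots,G_\eta\}$ and hence the family of candidate allocations $\{\bmx(j) : \omega_j \in \Omega_g\}$ are fixed and do \emph{not} depend on the bid vector. Consequently, passing from $\bmq$ to $\bmq_{-i}(\epsilon)$ changes only the objective $\tU_j$ used to select among a fixed set of allocations, not the allocations themselves. This reduces the lemma to a statement about how the $\arg\max$ over a finite set moves when one coordinate of the bid is raised.

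First I would compute the effect of the bid change on each perceived utility. Since $\bmq_{-i}(\epsilon)$ agrees with $\bmq$ in every coordinate except $i$, for each permutation $j$ we have
\[
\tU_j(\bmq_{-i}(\epsilon)) - \tU_j(\bmq) = (\epsilon - q_i)\, x_i(j).
\]
With $\epsilon > q_i$ this difference is the fixed positive constant $\epsilon - q_i$ for every permutation that allocates to $i$ (i.e. $x_i(j)=1$) and is exactly zero for every permutation that does not. Thus the bid change shifts the utilities of the ``$i$-allocating'' permutations uniformly upward by the same amount and leaves all other utilities untouched.

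Next I would use this uniform-shift structure to track the maximizer. Write $j^\*_{\bmq}$ for the selected permutation under $\bmq$ and recall the hypothesis $x_i(j^\*_{\bmq})=1$, so $j^\*_{\bmq}$ belongs to the $i$-allocating family and is a global maximizer of $\tU_\cdot(\bmq)$. After the shift, every $i$-allocating permutation gains the same constant, so $j^\*_{\bmq}$ remains best among them; and every non-$i$-allocating permutation keeps its old, no-larger value while $j^\*_{\bmq}$ strictly increases, so $j^\*_{\bmq}$ now strictly dominates every permutation with $x_i(j)=0$. Hence $j^\*_{\bmq}$ is still a maximizer under $\bmq_{-i}(\epsilon)$, and moreover \emph{every} maximizer under the new bids must allocate to $i$. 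The first assertion of the lemma (user $i$ is again allocated) follows immediately, and the set of maximizers under $\bmq_{-i}(\epsilon)$ is precisely the set of $\bmq$-maximizers that allocate to $i$.

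The one point requiring care --- and the only real obstacle --- is the identity $j^\*_{\bmq} = j^\*_{\bmq_{-i}(\epsilon)}$ when the $\arg\max$ is not unique. Here I would invoke the fixed, bid-independent tie-breaking rule implicit in the definition of $j^\*_{\bmq}$ (e.g. smallest index): since the new maximizer set is a subset of the old one and still contains $j^\*_{\bmq}$, and $j^\*_{\bmq}$ was the rule's choice over the larger set, the same deterministic rule selects $j^\*_{\bmq}$ over the subset as well, giving equality. If one prefers to avoid tie-breaking altogether, the same conclusion holds for bids in general position, where the maximizer is unique. This completes the argument.
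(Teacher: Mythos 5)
Your proof is correct and follows essentially the same route as the paper's: both exploit the bid-independence of the grouping and of the per-permutation allocations $\bmx(j)$, show via the shift identity $\tU_j(\bmq_{-i}(\epsilon)) - \tU_j(\bmq) = (\epsilon - q_i)\,x_i(j)$ (the paper uses the weaker inequality $\le \Delta$ for all $j$ plus equality at $j^\*_{\bmq}$) that $j^\*_{\bmq}$ remains a maximizer under the new bids, and conclude user $i$ is still allocated. Your explicit handling of non-unique maximizers via a bid-independent tie-breaking rule is a small refinement the paper silently skips, since its argument only establishes that $j^\*_{\bmq}$ is \emph{a} maximizer under $\bmq_{-i}(\epsilon)$, not that it is the one selected when ties occur.
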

\begin{proof}
	Without loss of generality, let $\epsilon = q_i + \Delta$ for some $\Delta > 0$. Note that since the bid value of only user $i$ has changed,
	we can conclude that 
	\begin{align}
	\label{eq:l11}
	\tU_j(\bmq_{-i}(\epsilon)) - \tU_j(\bmq) \le \Delta,
	\end{align}
	 for every group permutation $\omega_j$. Moreover,
	 \begin{align}
	 \label{eq:l12}
	 \tU_{j^\*_{\bmq}}(\bmq) + \Delta = \tU_{j^\*_{\bmq}}(\bmq_{-i}(\epsilon)),
	 \end{align}
	 i.e. the perceived social utilities under group permutation 
	$j^\*_{\bmq}$ for bid vectors $\bmq$ and $\bmq_{-i}(\epsilon)$ differ by amount $\Delta$ with latter having the larger value. Thus,
	we can conclude from (\ref{eq:l11}) and (\ref{eq:l12}) that $j^\*_{\bmq}$ is optimal group permutation for $\bmq_{-i}(\epsilon)$  as well.
	Now, the required follows from Algorithm~\ref{alg:ralo}.
\end{proof}

Lemma~\ref{lemma:1} implies that if a user unilaterally increases its bid, then it is more likely to get the resources. 
Next, we prove that our proposed algorithm is strategy proof. 

\begin{theorem}Algorithm \ref{alg:gospal} is strategy-proof.
\end{theorem}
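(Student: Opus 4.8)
The plan is to give a direct, VCG-style argument: fix user $i$ and the other users' bids $\bmq_{-i}$, and show that among all bids user $i$ could submit, its utility is maximized by the truthful bid $q_i = r_i$, which is exactly Definition~\ref{def: strategy-proofness}. The crucial observation is that the first term in the price~(\ref{eqn:price_gospal}), namely $A_i := \lim_{\epsilon\downarrow 0}\tU^\*(\bmq_{-i}(\epsilon))$, does not depend on user $i$'s own bid at all: by definition $\bmq_{-i}(\epsilon)$ replaces $q_i$ by $\epsilon$, so after letting $\epsilon\downarrow 0$ the quantity $A_i$ is a function of $\bmq_{-i}$ and the conflict structure only. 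Since there are finitely many group permutations, $\tU^\*(\bmq_{-i}(\epsilon)) = \max_{j}\big(\epsilon\, x_i(j) + \sum_{k\neq i} q_k x_k(j)\big)$ is a maximum of finitely many affine functions of $\epsilon$, hence continuous, so $A_i = \max_{j}\sum_{k\neq i} q_k x_k(j)$.

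Next I would write out the utility when user $i$ is allocated. Suppose under bid $q_i$ the winning permutation is $j'=j^\*_{\bmq}$ with $x_i(j')=1$. Then $\tU^\*(\bmq)=\sum_{k} q_k x_k(j') = q_i + \sum_{k\neq i} q_k x_k(j')$, so substituting into~(\ref{eqn:price_gospal}) the $q_i$ terms cancel and the allocated-user utility becomes $U_i(\bmq) = r_i - A_i + \sum_{k\neq i} q_k x_k(j') = r_i x_i(j') + \sum_{k\neq i} q_k x_k(j') - A_i$ (using $x_i(j')=1$ again); if $i$ is not allocated then $p_i=0$ and $U_i=0$. The point is that this is precisely the perceived welfare of the chosen permutation, evaluated with user $i$'s \emph{true} value $r_i$ in place of its bid, minus the bid-independent constant $A_i$.

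The key step is the comparison against the truthful bid. When $q_i=r_i$, the mechanism selects $j^\*$ to maximize $\sum_{k} q_k x_k(j) = r_i x_i(j) + \sum_{k\neq i} q_k x_k(j)$ over all $j$, so the truthful utility equals $W^\* - A_i$ where $W^\* := \max_j\big(r_i x_i(j) + \sum_{k\neq i} q_k x_k(j)\big)$; one checks this formula holds even when $i$ is not allocated truthfully, since then $W^\*=A_i$ and $U_i=0$. For \emph{any} deviating bid that leads to allocation with winning permutation $j'$, the utility computed above is $r_i x_i(j') + \sum_{k\neq i} q_k x_k(j') - A_i \leq W^\* - A_i$, because the bracketed quantity cannot exceed its maximum $W^\*$; and any deviation leading to non-allocation yields $0 \le W^\* - A_i$ since $W^\*\ge A_i$ always. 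In every case the deviation's utility is at most the truthful utility $W^\* - A_i$, which is strategy-proofness.

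I expect the main obstacle to be the careful bookkeeping around $A_i$. One must argue that replacing $q_i$ by $\epsilon$ and letting $\epsilon\downarrow 0$ (rather than deleting user $i$ as in VCG, which the text explicitly warns changes the conflict structure) produces a constant independent of user $i$'s bid while still keeping $i$ present in the partition and conflicts; and that the resulting expression $W^\*-A_i$ for the truthful utility is simultaneously nonnegative (consistent with Lemma~\ref{lemma:p_prop}) and an upper bound for every deviation. The monotonicity result (Lemma~\ref{lemma:1}) supplies the structural reason the winning permutation is stable under raising one's own bid and corroborates that under/over-bidding cannot help, but the cancellation identity above delivers the inequality $U_i(q_i,\bmq_{-i})\le U_i(r_i,\bmq_{-i})$ uniformly, without a separate case analysis on the direction of the deviation.
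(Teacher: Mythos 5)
Your proof is correct, and it takes a genuinely different route from the paper's. The paper argues by an explicit case analysis: two scenarios (overbidding, underbidding), each split into three cases according to whether user $i$ wins under the truthful and the deviating bid, gluing the cases together with the monotonicity lemma (Lemma~\ref{lemma:1}) and the price bound (Lemma~\ref{lemma:p_prop}). You instead give a uniform ``maximal-in-range with VCG payments'' argument, whose engine is the observation---implicit but never isolated in the paper---that the candidate allocations $\bmx(j)$ produced by Algorithm~\ref{alg:ralo} depend only on the (bid-independent) grouping and the conflict sets, so $\tU^\*$ is a maximum of finitely many linear functions of the bids. This makes the limit defining $A_i = \lim_{\epsilon\downarrow 0}\tU^\*(\bmq_{-i}(\epsilon)) = \max_j \sum_{k\neq i} q_k x_k(j)$ rigorous (continuity of a finite max of affine functions) and exhibits it as a constant in $q_i$, after which the cancellation identity $U_i(\bmq) = r_i x_i(j^\*_{\bmq}) + \sum_{k\neq i} q_k x_k(j^\*_{\bmq}) - A_i$ reduces truthfulness to the tautology that the mechanism's own argmax maximizes exactly this bracket when $q_i = r_i$. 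What your approach buys: no case split on the direction of the deviation, clean handling of argmax ties (the identity $U_i^{\rm truth} = W^\* - A_i$ holds under any tie-breaking, including the boundary case $W^\* = A_i$ when $i$ loses truthfully), and a transparent structural explanation of why the scheme is truthful. What the paper's approach buys: it is more elementary, and its monotonicity lemma has independent interest (it also justifies the limit computation in the pricing formula); the trade-off is that the paper's case analysis is more delicate about ties and about why the limiting value equals $\tU^\*(\bmq_{-i}(r_i))$ in Scenario~1, Case~(ii), points your formulation dispatches automatically. One presentational caveat: both your proof and the paper's are implicitly conditioned on a fixed realization of the random grouping; it is worth stating explicitly that strategy-proofness holds for every such realization, hence ex post with respect to the randomization.
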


\begin{proof}
	We proof the required by considering two scenarios.
	\textit{\textbf{Scenario 1}} : User $i$ bids more than its true valuation, i.e. $q_i > r_i$. Without loss of generality, $q_i = r_i + \Delta$
	for some $\Delta > 0$. Bids of the other users can be arbitrary. Thus, we compare two bid vectors, viz. $\bmq$ and $\bmq_{-i}(r_i)$,
	where latter corresponds to user $i$ bidding truthfully.
	This scenario is further bifurcated into three cases.\\

	\noindent \textit{Case (i)}: User $i$ gets resource under both bid vectors $\bmq$ and $\bmq_{-i}(r_i)$. By Lemma~\ref{lemma:1},
	it follows that the optimal group permutation remains same for both the bid vectors.  It follows that the optimal perceived utility values satisfy
	$\tU^\*(\bmq) = \tU^\*(\bmq_{-i}(r_i)) + \Delta$. Now,  from (\ref{eqn:price_gospal}), it follows that $p_i(\bmq) = p_i(\bmq_{-i}(r_i))$.
	Thus the required holds.
	
	\noindent \textit{Case (ii)}: User $i$ does not get the resource under $\bmq_{-i}(r_i)$, but gets it under $\bmq$. Note that utility for user 
	$i$ under $\bmq_{-i}(r_i)$ is zero as it does not get the resource. Now, we bound user $i$ utility under $\bmq$. Since 
	the bid for only user $i$ is different under two bid vectors, we can conclude that
	\begin{align} \label{eqn:l13}
	\tU^\*(\bmq)  - \tU^\*(\bmq_{-i}(r_i))  \leq \Delta. 
	\end{align}
	Now, from (\ref{eqn:price_gospal}), it follows that
	\begin{align}
	p_i(\bmq) & = \lim_{\epsilon \downarrow 0} \tU^\*(\bmq)  - (\tU^\*(\bmq) - q_i) \nonumber \\
	& = \lim_{\epsilon \downarrow 0} \tU^\*(\bmq) - (\tU^\*(\bmq) - r_i) + \Delta  \label{eqn:l14} \\
	& =  (\tU^\*(\bmq_{-i}(r_i)) - \tU^\*(\bmq)) + r_i + \Delta \label{eqn:l15} \\
	& \geq r_i. \label{eqn:l16} 
	\end{align}
	
	Equality (\ref{eqn:l14}) follows as $q_i = r_i + \Delta$. Equality (\ref{eqn:l15}) follows by Lemma~\ref{lemma:1}. Note that for every $\epsilon$
	smaller than $r_i$ user $i$ can not get resource as it can not get it  at bid value $r_i$. Moreover, since only bid for user $i$ is changing, the
	optimal perceived social utility remains unchanged. Hence, the limiting value equals maximum perceived social utility for bid $\bmq_{-i}(r_i)$.
	Finally, (\ref{eqn:l16}) follows from (\ref{eqn:l13}). Now, (\ref{eqn:l16}) implies that the utility for user $i$ under $\bmq$ can at most be 0, which is same when it bids true valuation $r_i$. This proves the required.
	
	\noindent\textit{Case (iii)}: The user $i$ neither gets resource at actual demand $q_i$, nor at $q_i + \Delta$. Here, utility for 
	user $i$ will remain zero. \\
	
	\noindent \textit{\textbf{Scenario 2}} : User $i$ bids less than its true valuation, i.e. $q_i < r_i$. Without loss of generality, $r_i = q_i + \Delta$
	for some $\Delta > 0$. Bids of the other users can be arbitrary. Thus, we compare two bid vectors, viz. $\bmq$ and $\bmq_{-i}(r_i)$,
	where latter corresponds to user $i$ bidding truthfully.
	This scenario is further bifurcated into three cases.\\
	
	\textit{Case (i)}: The user $i$ is allocated resource under $\bmq_{-i}(r_i)$ and also under $\bmq$. Analysis of this case is similar to that in Case~(i) of
	Scenario~1. Again here, it can be shown that the utility for user remains unchanged, and hence there is no benefit for deviating from true evaluation.
	
	\textit{Case (ii)}: The user $i$ is allocated resource under $\bmq_{-i}(r_i)$, but it does not get it under $\bmq$. This implies that the user  $i$ 
	has utility  $r_i - p_i(\bmq_{-i}(r_i))$ for bid vector $\bmq_{-i}(r_i)$, but on deviation its utility becomes zero. Now, the required follows from Lemma~\ref{lemma:p_prop}. 
	
\textit{Case (iii)}: The user $i$ neither gets a resource at $\bmq_{-i}(r_i)$ nor at $\bmq$. Here, the utility for the user remains Thus, no incentive on deviation from actual demand. This completes the proof.
\end{proof}

Computational complexity of the proposed algorithm is $O((\eta!)^2 n^2)$.
Recall that $\eta$ is the maximum cardinality of the constraint sets. In many applications, constraint set cardinality does not increase with the
number of users, e.g. cellular systems. In such systems, the algorithm provides computationally efficient strategy-proof mechanism for resource allocation.
In the next section, we describe the functioning of the proposed algorithm using example.
\section{Illustrative Example}
\label{sec:application}
In this section, we describe a practical application in which the proposed mechanism can be applied. 
We consider the problem of channel (spectrum) allocation in wireless networks. Due to limited availability of spectrum, 
efficient and fair allocation of spectrum becomes essential. Spectrum allocation can be performed using auctions. We describe application of the GOSPAL algorithm for the channel allocation among multiple base stations using sealed bid auctions. 
We consider a geographical region with $n$ base stations deployed to provide services to the subscribers. 
We assume that the spectrum database contains information about the channel available for allocation.
Let us assume that each base station provides coverage to the disc of radius $R$ units centered at the base station location. 
This means that, if any two base stations that are placed distance less than $R$ units can potentially interfere with each other 
whenever they are allocated the same channel. In other words, the base stations $B_1$ and $B_2$ are said to be interfering pairs if the
distance between them is less than $R$ units. 
Thus, in our model, base stations are the users. 
For a base station $i$, $\cS_i$ is the set of all base stations that are less than $r$ units away from $i$. 
Note that to avoid interference, a channel can be given to at most one user in $\{i\} \cup \cS_i$.
Typically in a wireless network, base stations are organized to cover certain geographical area and hence $|\cS_i|$ is
expected to be small for each $i$. 

We assume that the base stations compete for the channel. The need for resource depends on current state at the base station. For example,
if the base station has a large backlog, or if it had to serve time critical application, then its need for the resource
 is more in order to achieve the desired quality of service. Let $r_i$ denote the true evaluation at the base station $i$.
 To obtain the resource, base station $i$ submits bid $q_i$. Depending on the bids received, the centralized controller decides on which
 base stations get the channel. Channel allocation needs to be interference free.
Next, we demonstrate spectrum allocation in a wireless network using GOSPAL mechanism.

\par \textbf{Illustrative Example:} Consider a network consisting of $6$ base stations distributed across the region as illustrated in Figure \ref{network}. We assume each base station submits a non-zero bid to the auctioneer. The base stations in the wireless network are denoted as node and the interfering pair of base stations have an edge joining them.  Let the bid vector $\bmq = [5 ~7 ~8 ~9 ~6 ~9]$. 
We describe resource allocation under three strategy-proof resource auction mechanisms, one proposed here and two from the literature.
Specifically, we consider SMALL and greedy auction mechanisms proposed in \cite{wu2011small} and \cite{zhou2008ebay}, respectively. 
\begin{figure}[h]
\centering
\includegraphics[width = 0.25\textwidth]{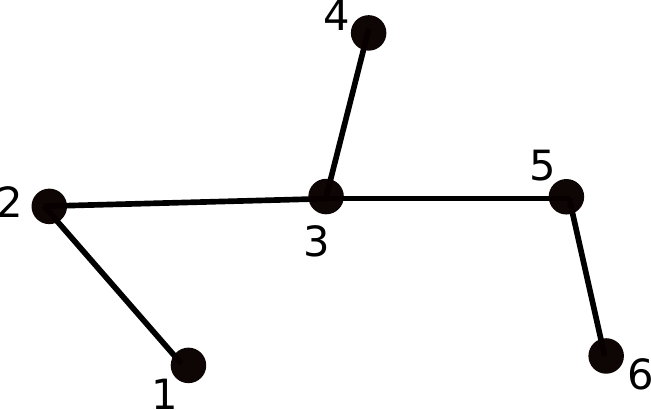}
\caption{Wireless Network}
\label{network}
\end{figure}

\begin{table}[h]
	\caption{Base Station Grouping }\label{tab:grp_list}
	\centering
	\begin{tabular}{| l || l| } 
		\hline
		
		\multicolumn{2}{|c|}{Group List}\\
		\hline
		$G_1$ & \{1, 5\} \\ 
		\hline
		$G_2$ & \{2, 4\}  \\ 
		\hline
		$G_3$ & \{3, 6\} \\ 
		\hline
	\end{tabular}
\end{table}
\vspace{-0.3cm}
\begin{table}[h]
	\caption{Group arrangements \& social welfare}\label{tab:permutation}
	\centering
	\begin{tabular}{|p{1.5cm}|| p{1.9cm} || p {1.5cm}|} 
		\hline 
		Sequence No. &Group arrangements & Social Welfare\\
		\hline
		\hline 
		1 & [$G_1$,  $G_2$,  $G_3$] & 20\\ 
		\hline
		2 & [$G_1$,  $G_3$,  $G_2$] & 20\\ 
		\hline
		3 & [$G_2$,  $G_1$,  $G_3$] & 22 \\ 
		\hline
		4 & [$G_2$,  $G_3$,  $G_1$] & 25 \\ 
		\hline
		5 & [$G_3$,  $G_1$,  $G_2$] & 22 \\ 
		\hline
		6 & [$G_3$,  $G_2$,  $G_1$] & 22 \\ 
		\hline
		
	\end{tabular}
\end{table}

\subsection{Allocation in GOSPAL}
In the first step, GOSPAL partitions the set of all users into non-conflicting groups using Algorithm~\ref{alg:rcg}.
Let the output of Algorithm~\ref{alg:rcg} be as shown in Table~\ref{tab:grp_list}.
With $3$ groups, we get $|\Omega_g| = 6$ arrangements. 
For each arrangement, we determine the social welfare $\tU(\bmq)$ based on the bids. 
Table~\ref{tab:permutation} provides the social utility for each permutation (see Algorithm~\ref{alg:ralo}).
Note that the permutation $[G_2,G_3,G_1]$ achieves the highest social utility 25. As per
Algorithm~\ref{alg:ralo}, the resource allocation chosen by GOSPAL is $\bmx^\*(\bmq) = (0,1,0,1,0,1)$ (see Algorithm~\ref{alg:gospal}).

\subsection{Allocation in SMALL}
Auction mechanism SMALL also partitions the set of all users randomly into non-conflicting groups.
Without loss of generality, let the partition be same as that shown in Table~\ref{tab:grp_list}.
SMALL determines the group valuation $\sigma(G_j)$ for each group $j$, where
$\sigma(G_j) = (|G_j| -1)\times \min\{q_j | j \in G_j \}$. Each user, except the one with the least bid, in the group with maximum evaluation 
gets the channel. The users that get the channel pay price equal to the least bid value in the group.
In the example considered here, $G_3$ has the highest group evaluation of 8. Hence, only user 6 gets the channel and it
pays price equal to 8.

\subsection{Allocation in Greedy}
The greedy scheme starts by picking a user with the highest bid value. This user gets the channel. In the iterative step, 
the highest bidding user in the set of users which are neither already selected nor are in the constraint set of the selected users is selected.
The process continues until no user can be selected. The price for a selected user $i$ is calculated as follows:
We remove the user $i$ from the system. Compute the greedy allocation for the remaining $n-1$ users. Let $c$ be the 
highest bid value of of a selected user in $\cS_i$ in the new allocation. Then, the price for $i$ is $c$. In the example considered here,
users 6, 4 and 2 get the channel and the prices for these are 6, 8 and 5, respectively.  

In the constructed example, GOSPAL and greedy have the same social utility, while SMALL achieves much lesser value.
Moreover, GOSPAL and greedy provide maximal allocation, while SMALL has lesser resource utilization. Though the proposed algorithm
performs on par with existing schemes in the constructed example, to understand the performance
comparison of these schemes we perform Monte Carlo simulation as described in the following section.

\section{Simulation Results}
\label{sec:Simulation}
In this section, we compare the performances of various strategy-proof mechanisms for resource allocation.
To model the resource allocation constraints, we generate a random undirected graph $G=(V,E)$ whose nodes represent users and 
the constraint set for user $i$ equals the set of neighbors in $G$. The random graphs are generated with the 
desired degree distribution using configuration model \cite{chung2002connected}. In all our simulations, the maximum degree is restricted to 4.
Performance of the proposed scheme is compared against VCG, SMALL and greedy schemes.
Simulations are performed in MATLAB \cite{moler1982matlab}. 
We compare the performances based on three parameters:

\noindent
$\bullet$ Social Welfare: It is defined as the sum of the valuations of the base stations which are assigned channels. \\
\noindent
 $\bullet$ Spectrum Utilization: It is defined as the total number of base stations which are assigned channels in allocation phase. \\
 \noindent
$\bullet$ Fairness across time: It quantifies disparity between the average number of times the channel is allocated to various base stations.

First, we compare the performance of the proposed mechanism, SMALL and greedy with VCG for small graph sizes (up to 21 nodes). At each node, bids are generated
at random in the interval [5,15]. Figure \ref{compare_1}, shows the social welfare and and the spectrum utilization obtained under the four schemes.
Note that VCG based allocation provides better (optimal) in both respect to the ones in GOSPAL, SMALL and greedy. However, computation of the VCG allocation is computationally challenging task even for sparse modest size network. On the other hand, GOSPAL, SMALL and greedy can be used to provide resource allocation for large networks. Note that our proposed algorithm outperforms SMALL in all the cases significantly. More importantly, resource utilization under our scheme is much better than both SMALL and greedy, which is close to that in VCG. 

\begin{figure*}[t!]
    \centering
    \begin{subfigure}[t]{0.28\textwidth}
        \centering
       \includegraphics[width=\textwidth]{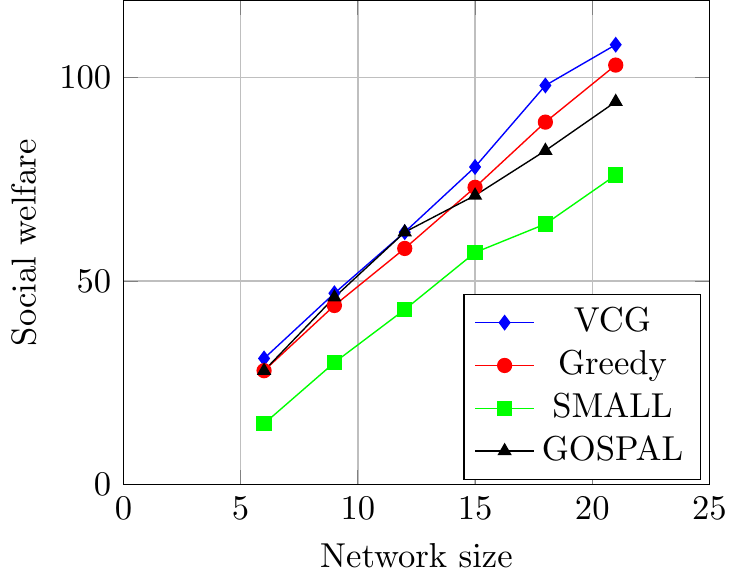}
        \caption{{Social Welfare}}
        \label{fig:sw1}
        \end{subfigure}%
    ~
    \begin{subfigure}[t]{0.28\textwidth}
        \centering
       \includegraphics[width=\textwidth]{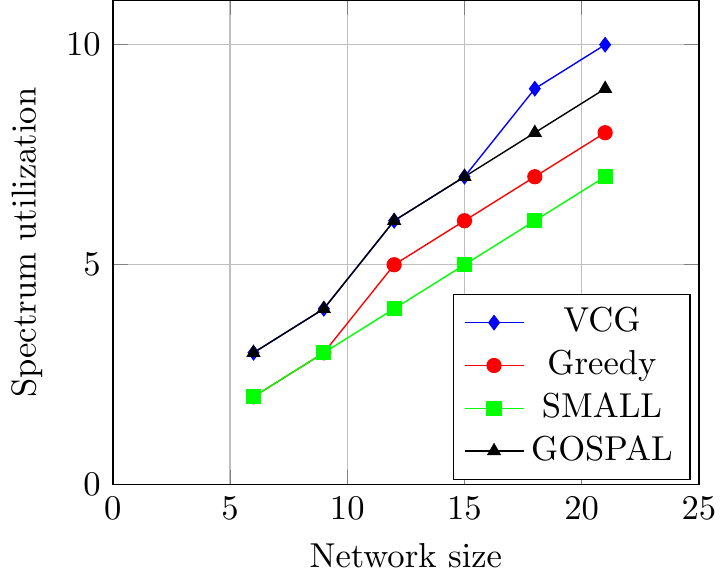}
        \caption{{Spectrum utilization}}
        \label{fig:su1}
        \end{subfigure}
     ~
         \begin{subfigure}[t]{0.28\textwidth}
        \centering
       \includegraphics[width=\textwidth]{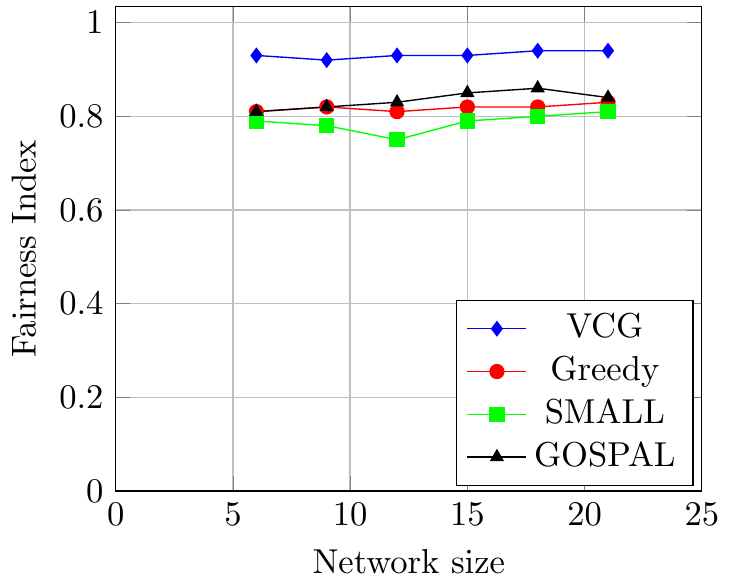}
        \caption{{Fairness Index}}
        \label{fig:fair_idx1}
        \end{subfigure}
\caption{{Performance comparison for different algorithms in small network.}}
\label{compare_1}
\end{figure*}

To further understand the performance of the proposed algorithm, we perform simulations on large networks. In this case, because of computational 
intractability of VCG allocation, we compare the results of our scheme with SMALL and greedy resource allocation mechanisms.
In Figure~\ref{compare_2}, we consider networks in which degree distribution is uniform over $\{1,2,3,4\}$. The base station bids are uniformly distributed
in the interval $[8,30]$.
The results shown are averaged over 100 different topologies with bids chosen independently for each base station. It can be observed that greedy mechanism provides the highest value of social welfare among all the schemes. However, GOSPAL provides marginally better spectrum utilization. Both these scheme
significantly outperform SMALL.

\begin{figure*}[t!]
    \centering
    \begin{subfigure}[t]{0.28\textwidth}
        \centering
       \includegraphics[width=\textwidth]{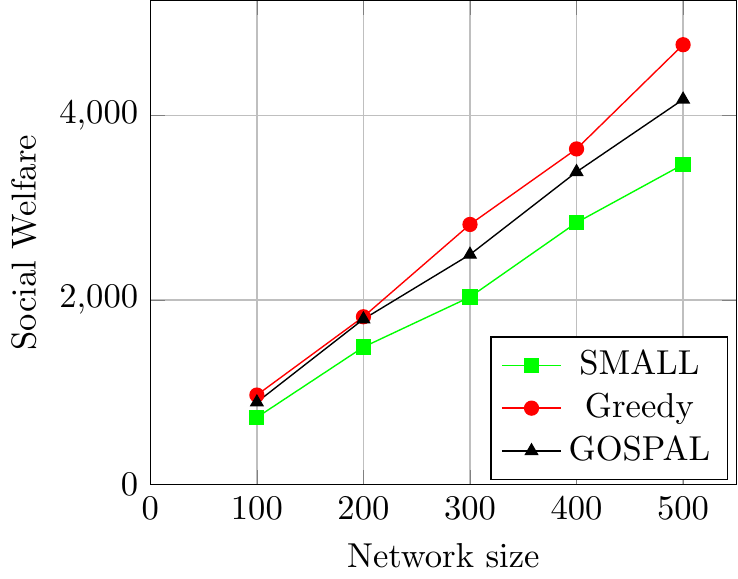}
        \caption{{Social Welfare}}
        \label{fig:sw1}
        \end{subfigure}%
    ~
    \begin{subfigure}[t]{0.28\textwidth}
        \centering
       \includegraphics[width=\textwidth]{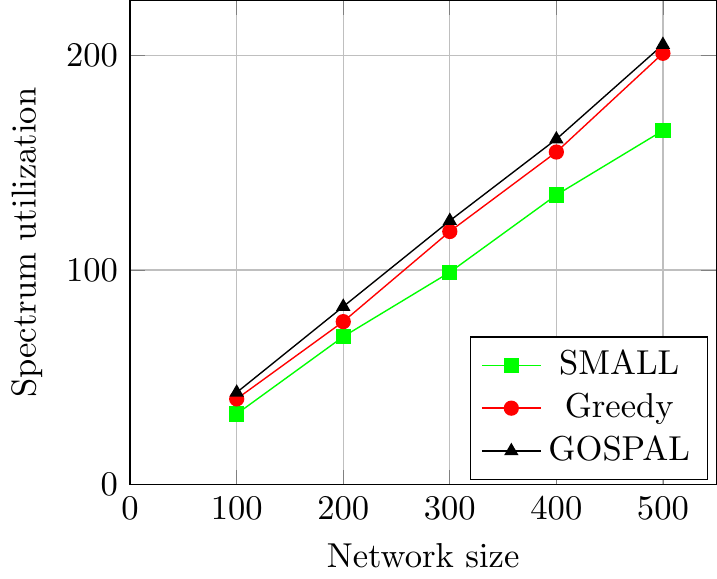}
        \caption{{Spectrum utilization}}
        \label{fig:su1}
        \end{subfigure}
     ~
         \begin{subfigure}[t]{0.28\textwidth}
        \centering
       \includegraphics[width=\textwidth]{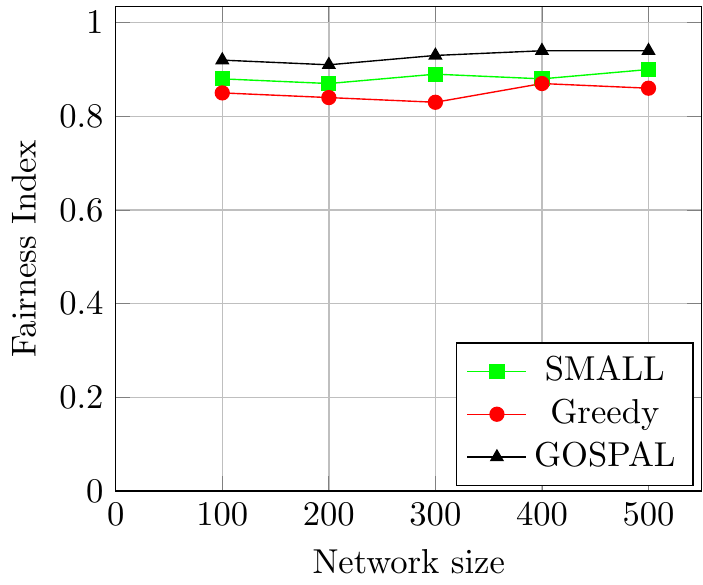}
        \caption{{Fairness Index}}
        \label{fig:fair_idx2}
        \end{subfigure}
\caption{{Performance comparison for different algorithms in large network.}}
\label{compare_2}
\end{figure*}

\begin{figure*}[t!]
    \centering
    \begin{subfigure}[t]{0.28\textwidth}
        \centering
       \includegraphics[width=\textwidth]{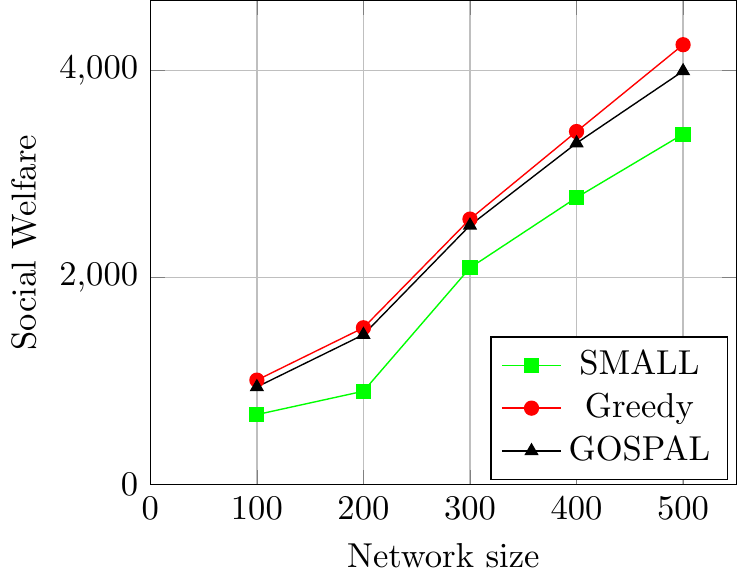}
        \caption{{Social Welfare}}
        \label{fig:sw1}
        \end{subfigure}%
    ~
    \begin{subfigure}[t]{0.28\textwidth}
        \centering
       \includegraphics[width=\textwidth]{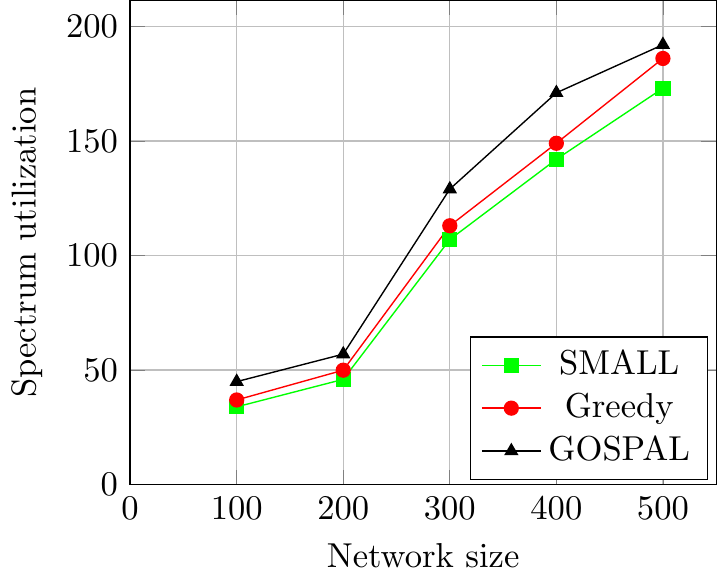}
        \caption{{Spectrum utilization}}
        \label{fig:su1}
        \end{subfigure}
     ~
         \begin{subfigure}[t]{0.28\textwidth}
        \centering
       \includegraphics[width=\textwidth]{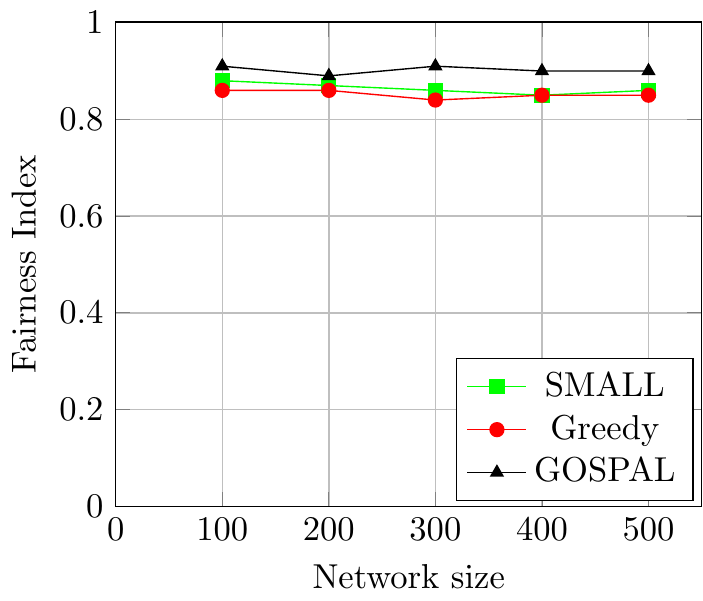}
        \caption{{Fairness Index}}
        \label{fig:fair_idx3}
        \end{subfigure}
\caption{{Performance comparison for different algorithms with increasing probability to higher degree interfering base station.}}
\label{compare_3}
\end{figure*}

\begin{figure*}[t]
    \centering
    \begin{subfigure}[t]{0.28\textwidth}
        \centering
       \includegraphics[width=\textwidth]{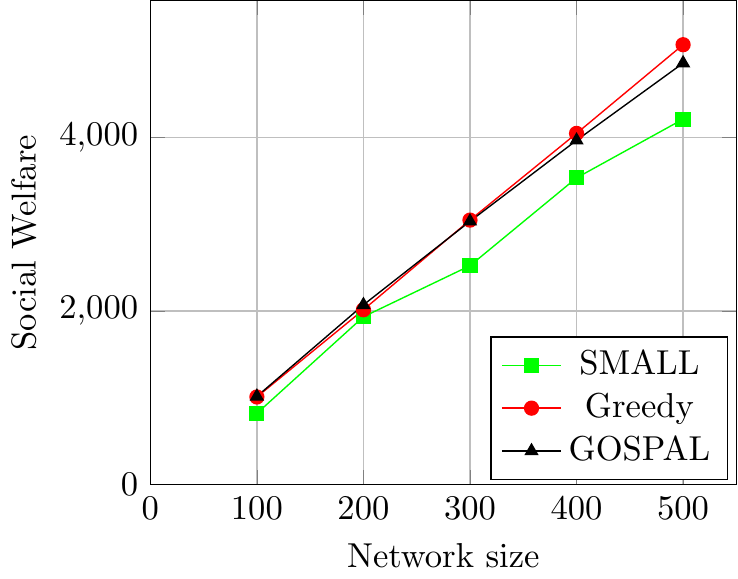}
        \caption{{Social Welfare}}
        \label{fig:sw1}
        \end{subfigure}%
    ~
    \begin{subfigure}[t]{0.28\textwidth}
        \centering
       \includegraphics[width=\textwidth]{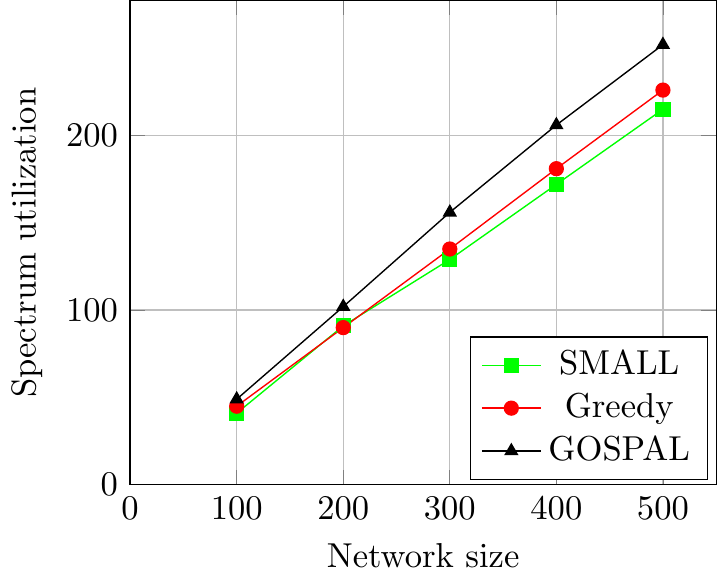}
        \caption{{Spectrum utilization}}
        \label{fig:su1}
        \end{subfigure}
     ~
         \begin{subfigure}[t]{0.28\textwidth}
        \centering
       \includegraphics[width=\textwidth]{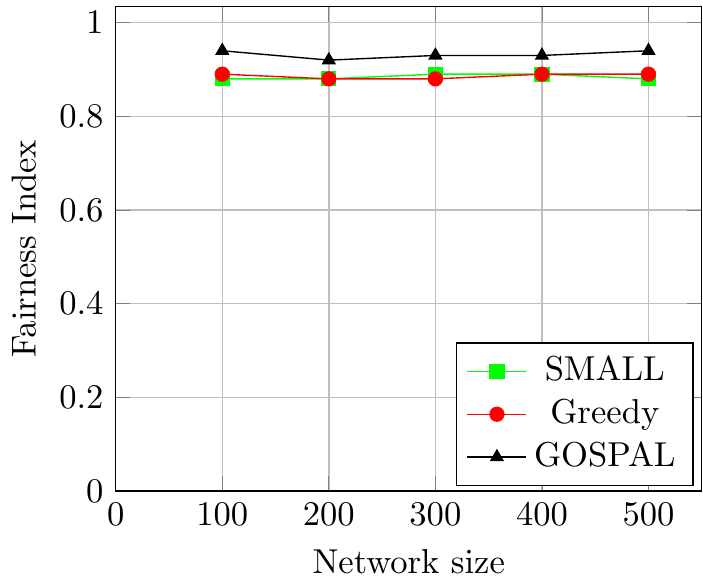}
        \caption{{Fairness Index}}
        \label{fig:fair_idx4}
        \end{subfigure}
\caption{{Performance comparison for different algorithms with decreasing probability to higher degree interfering base station.}}
\label{compare_4}
\end{figure*}

Next to understand the impact of the degree distribution on the performance of various schemes, we repeat the same experiment as above
with following probability mass functions over degree values $\{1,2,3,4\}$: (a)~$(0.1,0.2,0.3,0.4)$ and (b)~$(0.4,0.3,0.2,0.1)$.
Note that in case (a), the network will have large number of nodes with degree four, while in case (b) large number of nodes will have degree 1.
Figures~\ref{compare_3} and~\ref{compare_4} provide the results in case (a) and (b), respectively. Note that the results follow similar pattern as that in the
uniform degree case. These experiments demonstrate that the proposed algorithm though outperforms SMALL, only provides comparable performance with respect to greedy.

When a large percentage of base stations within the network have a high degree of conflict, both social welfare and resource utilization are reduced. This is illustrated in Figure~\ref{compare_3}. On the other hand, as shown in Figure~\ref{compare_4}, a reduction in the percentage of base stations with a high degree of conflict sets, leads to improvements in the above mentioned parameters. This reflects the fact that lesser number of base stations are allocated channels if the constraint set $\cS_i$ is large and vice versa.

\begin{figure}
\centering
\includegraphics[width = 0.35\textwidth]{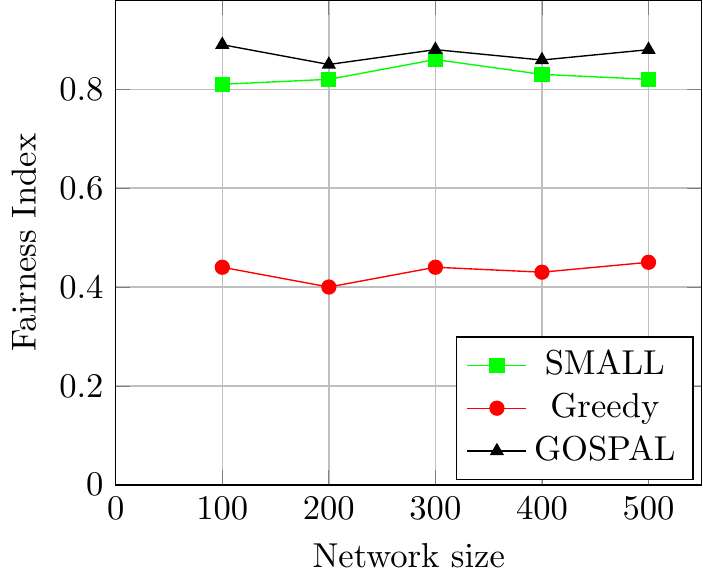}
\caption{ Comparison of fairness index for different algorithms in large networks.}
\label{fig:fairness}
 \end{figure}
 
Next, we perform simulations to see how various algorithms perform when the resource allocation process is repeated periodically. 
For this, we generate a random network topology and keep it fixed. In each topology we consider 100 different randomly generated bid values.
For each, we calculate the resource allocation under all the three schemes. Let $\alpha_i^\pi$ denote the fraction of time base station $i$ is allocated
resource under mechanism $\pi$. Based on the vector $\alpha_i^\pi$ we calculate Jain's fairness index. Jain's fairness index is a metric used in networking to
determine the share of system resources allocated to a user. 
First, we consider a case when bid values are independent and identically distributed (iid) across users and time.
The fairness for various schemes is shown in Figures~\ref{fig:fair_idx1},~\ref{fig:fair_idx2},~\ref{fig:fair_idx3},~\ref{fig:fair_idx4}. Note that all the schemes perform similarly.
Second, we consider a scenario in which bid values are iid across time, but not across users. For this, we initially choose
a value $\mu_i$ uniformly distributed in interval [8,35] for each user $i$. Now the bid for user $i$ is generated in slot $t$ as
$\mu_i + b(t)$ where $b(t)$'s are iid and uniformly distributed in the interval [-2,1]. The fairness index for various schemes
is shown in Figure~\ref{fig:fairness}. In this case, GOSPAL and SMALL significantly outperform the greedy scheme.


%

\section{Conclusion}
\label{sec:conclusion}
In this paper, we consider resource allocation problem among multiple users with constrained set. We propose Grouping based Optimal Strategy-Proof Allocation (GOSPAL) Algorithm based on the grouping of non-conflicting users. We prove that GOSPAL is strategy-proof. Since the proposed algorithm is computationally efficient,  it is feasible to implement even in large number of users. Using simulation results, we observe that GOSPAL achieves social utility and resource utilization close to the optimal. The mechanism also achieves better fairness index for resource allocation among the users in comparison to the other existing scheme.

\bibliographystyle{ACM-Reference-Format}
\bibliography{refer}

\end{document}